\documentclass[twocolumn,superscriptaddress,aps]{revtex4-2}
\usepackage{graphicx}
\usepackage{amsmath}
\usepackage{amsthm}
\usepackage{amssymb}
\usepackage{latexsym}
\usepackage{array}
\usepackage{booktabs}
\usepackage{tabularx}
\usepackage{hyperref}
\usepackage{amsfonts}
\usepackage{dsfont}
\usepackage{mathrsfs}
\usepackage{verbatim}
\usepackage{bbold}
\usepackage{lipsum}
\usepackage[normalem]{ulem}
\usepackage{algorithm}

\usepackage{bm}
\usepackage{times}

\usepackage{upgreek}

\usepackage{makecell}
\usepackage{adjustbox}

\usepackage{algpseudocode}

\usepackage{color}

\newcommand{\bra}[1]{\left<#1\right|}
\newcommand{\ket}[1]{\left|#1\right>}
\newcommand{\abs}[1]{\left|#1\right|}
\newcommand{\norm}[1]{\left\lVert#1\right\rVert}

\newcommand{\braket}[2]{\left<{#1}|{#2}\right>}
\newcommand{\ketbra}[2]{\ket{#1}\!\!\bra{#2}}

\newcolumntype{L}[1]{>{\raggedright\arraybackslash}p{#1}}

\newtheorem{theorem}{Theorem}

\newtheorem{corollary}{Corollary}
\newtheorem{definition}{Definition}

\newcommand{\Tr}{\operatorname{Tr}}

\newcommand{\proj}{\mathbb{P}}
\newcommand{\eps}{\varepsilon}

\newcommand{\cH}{\mathcal{H}}

\newcommand{\cB}{\mathcal{B}}

\newcommand{\Id}{\hat{\mathds{1}}}

\newcommand{\hP}{\hat{P}}
\newcommand{\hrho}{\hat{\rho}}
\newcommand{\hM}{\hat{M}}
\newcommand{\hU}{\hat{U}}
\newcommand{\hV}{\hat{V}}
\newcommand{\hW}{\hat{W}}
\newcommand{\hQ}{\hat{Q}}
\newcommand{\hO}{\hat{O}}
\newcommand{\hPi}{\hat{\Pi}}

\newcommand{\Haar}{\operatorname{Haar}}
\newcommand{\Risk}{\operatorname{Risk}}

\begin{document}

\title{No Reference-Free Generalization in Quantum Machine Learning}

\author{Jeongho~Bang}\email{jbang@yonsei.ac.kr}
\affiliation{Institute for Convergence Research and Education in Advanced Technology, Yonsei University, Seoul 03722, Republic of Korea}
\affiliation{Department of Quantum Information, Yonsei University, Incheon 21983, Republic of Korea}

\begin{abstract}
Quantum machine learning is often motivated by the exponentially large state space of quantum systems, but this promise leaves a basic generalization problem unresolved: how can a learner assign different meanings to unseen quantum directions when the training data provide no preferred basis, measurement frame, or other orienting structure? We address this identifiability problem by formulating supervised learning without an external quantum reference frame, so that predictions cannot depend on an arbitrary choice of Hilbert-space coordinates. This requirement forces the learned classifier to preserve every unitary symmetry left unbroken by the training data. We prove that whenever the training states fail to span the full Hilbert space, all pure states orthogonal to their span must receive the same prediction---even when those states are mutually orthogonal and perfectly distinguishable once an appropriate measurement is supplied. The limitation is therefore not caused by state discrimination, optimization, or computational power, but by missing reference information. We further establish a robust version under weak symmetry breaking and show that learning generic unstructured concepts on multiqubit systems requires exponentially many independently oriented training directions. Numerical illustrations visualize the resulting prediction collapse and its controlled relaxation. Our results identify feature maps, measurement bases, Hamiltonians, locality, symmetry priors, architectures, and sufficiently diverse training states as operational resources for generalization. The central implication is that Hilbert-space dimension alone is not a learnable feature space: successful QML must specify the physical structure that gives unseen quantum directions semantic meaning.
\end{abstract}

\maketitle

\section{Introduction}\label{sec:introduction}

Quantum machine learning (QML) is often introduced through a geometric promise: an $n$-qubit system lives in a Hilbert space of dimension $d=2^n$, suggesting that a quantum model may have access to an exponentially large feature space~\cite{biamonte2017quantum,schuld2019feature,havlicek2019supervised}. The size of the ambient space, however, does not by itself explain how a learner should generalize. Generalization requires more than many available directions; it requires a rule that tells the learner which directions, distances, or transformations are relevant to the task. In practical QML, this semantic structure is usually supplied by a computational basis, a classical-to-quantum feature map, a known observable, a Hamiltonian, locality, a tensor-product decomposition, a symmetry prior, or a chosen variational architecture~\cite{schuld2020circuit,schuld2021encoding,cerezo2021variational,larocca2022group,meyer2023exploiting}. The motivating question of this work is what remains learnable when none of these structures is available and the learner must rely on the labeled quantum states alone.

A simple thought experiment reveals the difficulty. Suppose a learner is given a finite collection of labeled pure states, but no calibrated basis, feature map, measurement convention, or other side information. An external observer may name two previously unseen orthogonal states and assign them different labels. Quantum mechanics poses no obstacle to distinguishing those states once the appropriate measurement basis is supplied. Yet, the training data may contain no physical information that selects that basis or associates either direction with a particular label. A simultaneous unitary rotation of all quantum states then changes only the Hilbert-space coordinates used to describe the experiment, not the learning problem itself. A learner with no external reference should therefore give physically equivalent answers in every such description.

The purpose of this work is to turn that observation into a precise limitation on supervised quantum generalization. We isolate the orientational information carried by a quantum training set and ask how far this information can support predictions beyond the directions represented in the data. The resulting model is intentionally not a description of every QML algorithm. Most practical QML models are referenceful, because their encodings, measurements, hardware, or architectures already privilege particular directions. Our reference-free setting instead provides a controlled baseline: it asks what the labeled quantum examples can accomplish without those additional resources, and thereby identifies what practical models must supply in order to escape the limitation.

We call a learner reference-free when its prediction rule does not depend on an arbitrary choice of Hilbert-space coordinates: rotating all training states must simply rotate the learned classifier in the same way. This is the natural consistency requirement when no external quantum reference frame is available~\cite{bartlett2007reference,gour2008resource}. It also means that the learner cannot break a symmetry that the training data themselves leave intact. By placing no restriction on the optimizer, circuit ansatz, or computational power, we isolate a more basic question---whether the data make the target labels operationally identifiable at all.

Our main result is that, when the training states span only a proper subspace, every pure state in the orthogonal complement must receive the same prediction. The obstacle is therefore not quantum state discrimination, but the absence of a label-bearing reference for unseen directions. This conclusion remains stable under weak symmetry breaking. For generic binary concepts, it also yields an exponential barrier: on $n$ qubits, nontrivial worst-case generalization requires a training span whose rank is of order $2^n$. Repeated copies of already observed directions may improve estimation, but they do not supply the missing orientation.

These results complement several established limitations in QML. Barren plateaus concern the optimization~\cite{mcclean2018barren,cerezo2021variational}, quantum no-free-lunch theorems concern the average performance over broad task ensembles~\cite{poland2020nofree,sharma2022reformulation}, and capacity analyses ask how a specified model generalizes~\cite{caro2022generalization,abbas2021power,haug2024generalization,peters2023generalization,gilfuster2024rethinking}. Our theorem asks an earlier identifiability question: has the learning problem supplied enough physical structure for unseen directions to carry different labels at all? It also differs from the unknown-state discrimination, because the obstruction can persist for orthogonal states that are easy to distinguish once a measurement basis is supplied.

Taken together, the results support a resource-based view of quantum generalization. A training set carries not only statistical information in its labels, but also orientational information in the quantum directions it fixes. A small or low-rank dataset is therefore a partial quantum reference frame. Any remaining unitary freedom forces predictions to be constant along entire orbits unless a feature map, basis, observable, Hamiltonian, locality structure, symmetry representation, or architecture breaks that freedom. These familiar ingredients are not incidental implementation details; they are the physical and inductive structures that turn Hilbert-space directions into learnable features.

The resulting message can be summarized as follows:
\begin{quote}
\centering
``\emph{quantum generalization is symmetry breaking}.''
\end{quote}
The exponential dimension of Hilbert space is a reservoir of possible representations, not an automatic source of exponentially many learnable labels. This paper makes that distinction precise through an exact stabilizer-orbit no-go theorem, a robust approximate extension, and a quantitative reference-rank lower bound. The conclusion is that any successful generalization claim should identify the reference structure that gives operational meaning to unseen quantum directions.

\section{Learning without an external quantum reference frame}\label{sec:setup}

\subsection{Quantum supervised data and classifiers}

Let $\cH$ be a complex Hilbert space with finite dimension $d\ge 2$. Pure states are represented by rank-one projectors
\begin{eqnarray}
\hP_\psi:=\ketbra{\psi}{\psi},
\quad
\braket{\psi}{\psi}=1,
\label{eq:pure-projector}
\end{eqnarray}
so that the global phase of $\ket{\psi}$ is not physical~\cite{bang2026quantum}. A binary quantum training set is a finite list
\begin{eqnarray}
D=\{(\hP_i,y_i)\}_{i=1}^m,
\label{eq:dataset}
\end{eqnarray}
where $y_i\in\{0,1\}$. We allow repeated examples, but only the rank of their span will matter for the strongest statements. Throughout this work, the projectors are treated as the ideal mathematical description of the training states. This deliberately removes finite-copy tomography from the discussion. If only finitely many copies are available, one obtains additional estimation errors; those errors are not the source of the no-go theorem below. The corresponding finite-copy remarks, together with the extension to randomized learners, are given in Appendix~\ref{app:randomized}.

A trained binary classifier is represented by a POVM effect
\begin{eqnarray}
0\le \hM_D\le \Id.
\label{eq:effect}
\end{eqnarray}
The probability of predicting label $1$ on an input state $\hrho$ is
\begin{eqnarray}
p_D(1 | \hrho) = \Tr(\hM_D \hrho),
\label{eq:probability}
\end{eqnarray}
and the probability of label $0$ is $1-p_D(1|\hrho)$. This includes projective classifiers, general POVM classifiers, quantum kernel decision rules after training, and variational classifiers once the trained circuit and final measurement have been compressed into an effective two-outcome measurement~\cite{banchi2021generalization,schuld2020circuit,havlicek2019supervised}. For example, a variational classifier of the form
\begin{eqnarray}
\hrho \longmapsto \Tr\left[\hO \hU(\theta_D)\hrho\hU(\theta_D)^\dagger \right]
\label{eq:effective-measurement-example}
\end{eqnarray}
is represented by the effect $\hM_D=\hU(\theta_D)^\dagger \hO \hU(\theta_D)$, up to the standard affine rescaling needed to make it a POVM element. Thus, the theorems to be derived later do not depend on the training algorithm used to obtain $\hM_D$.

The model is intentionally generous. We do not impose computational limits, optimization limits, architectural restrictions, or sample-complexity limits. We simply ask what any final classifier can do if the learning rule is reference-free. Therefore, the no-go result is not a statement about the weakness of a particular optimizer or ansatz; it is a structural statement about the physical information contained in the quantum data.

\subsection{Reference-free covariance}

A global change of quantum reference frame is represented by a unitary $\hU\in U(\cH)$. It transforms a dataset as
\begin{eqnarray}
\hU D\hU^\dagger := \{(\hU\hP_i\hU^\dagger,y_i)\}_{i=1}^m.
\label{eq:transformed-data}
\end{eqnarray}
The labels are not changed, because they are classical outcomes or class names. Only the quantum representatives of the examples are rotated.

\begin{definition}[Reference-free learner]\label{def:reference-free}
A learning rule $D\mapsto \hM_D$ is reference-free if, for every unitary $\hU\in U(\cH)$,
\begin{eqnarray}
\hM_{\hU D\hU^\dagger}=\hU\hM_D\hU^\dagger.
\label{eq:covariance}
\end{eqnarray}
\end{definition}

Definition~\ref{def:reference-free} says that the learning rule has no built-in direction in Hilbert space. If all quantum states in the problem are rotated by $\hU$, the learned measurement rotates by the same $\hU$. This is the same physical logic used in quantum reference-frame theory: when no external frame is available, admissible descriptions and operations must respect the relevant symmetry~\cite{bartlett2007reference,gour2008resource}. In the present setting, the relevant symmetry is the full unitary group acting on the quantum feature Hilbert space.

The definition should not be confused with a claim that practical QML algorithms must be reference-free. Most practical QML algorithms are explicitly referenceful. A computational basis, a Pauli measurement, a fixed hardware coupling graph, a feature map $x\mapsto \ket{\phi(x)}$, or a Hamiltonian used to define the task already supplies a reference. Such algorithms can break Eq.~\eqref{eq:covariance}, and our theorem then identifies precisely what resource they are using: an external orientation of Hilbert space.

The covariance condition is also not merely a mathematical convenience. Suppose two laboratories describe the same physical training experiment using two different Hilbert-space coordinate systems related by $\hU$. If the learner has no additional reference, these two descriptions should lead to physically equivalent classifiers. Eq.~\eqref{eq:covariance} is exactly that consistency requirement. A rule that violates it must contain hidden coordinate information, and that coordinate information is the physical resource.

\subsection{Reference structure as a physical resource}\label{subsec:reference-resource}

The word ``reference'' is used here in a broad but operational sense. It means any structure, available to the learner before prediction, that selects preferred directions or preferred transformations in Hilbert space. A reference can be a physical device, such as, a measurement apparatus aligned to a Pauli axis. It can be an encoded convention, such as, a circuit that prepares a feature state from a classical input. It can also be a model prior, such as, a locality assumption or symmetry representation. What matters is that the structure is not inferred from the unlabeled complement of the training span; it is supplied independently of the unknown test direction~\cite{cong2019qcnn,nguyen2024equivariant}.

Table~\ref{tab:reference-resources} lists common resources and explains how each one breaks the global unitary freedom. The table is not meant to be exhaustive. Its purpose is to make clear that the theorem does not prohibit ordinary QML practice. Instead, it asks that the reference structure enabling generalization be stated explicitly. In many successful models, this reference structure is the most important part of the learning problem.

\begin{table*}[t]
\begin{adjustbox}{max width=0.98\textwidth}
\begin{tabular}{lll}
\toprule
Resource & What it fixes & How it avoids the no-go theorem \\
\midrule
\parbox{0.23\textwidth}{\raggedright Computational basis} & \parbox{0.36\textwidth}{\raggedright A preferred orthonormal basis $\{\ket{x}\}$} & \parbox{0.52\textwidth}{\raggedright Labels may refer to basis strings rather than arbitrary rays.} \\
\parbox{0.23\textwidth}{\raggedright Feature map $x\mapsto\ket{\phi(x)}$} & \parbox{0.36\textwidth}{\raggedright A structured image of classical data inside $\cH$} & \parbox{0.52\textwidth}{\raggedright Generalization is over the image of the map, not all of $\proj(\cH)$.} \\
\parbox{0.23\textwidth}{\raggedright Known observable or POVM} & \parbox{0.36\textwidth}{\raggedright Measurement directions and operational outcomes} & \parbox{0.52\textwidth}{\raggedright The classifier may depend on the supplied measurement frame.} \\
\parbox{0.23\textwidth}{\raggedright Hamiltonian or locality prior} & \parbox{0.36\textwidth}{\raggedright A tensor-product and interaction structure} & \parbox{0.52\textwidth}{\raggedright Only concepts compatible with the physical structure are considered.} \\
\parbox{0.23\textwidth}{\raggedright Symmetry prior} & \parbox{0.36\textwidth}{\raggedright A representation of a relevant task symmetry} & \parbox{0.52\textwidth}{\raggedright The model uses the intended symmetry, not the residual gauge symmetry.} \\
\parbox{0.23\textwidth}{\raggedright High-rank training set} & \parbox{0.36\textwidth}{\raggedright Many independent quantum directions} & \parbox{0.52\textwidth}{\raggedright The data themselves act as a partial quantum reference frame.} \\
\bottomrule
\end{tabular}
\end{adjustbox}
\caption{Typical sources of Hilbert-space reference structure in QML. A reference-free learner has none of these resources except the labeled projectors in $D$.}
\label{tab:reference-resources}
\end{table*}

This resource interpretation also explains why the reference-free assumption is physically meaningful even though it is restrictive. A theorem about reference-free learners is a theorem about what the data alone can do. Once a basis, observable, or feature map is added, one has changed the resource theory of the problem. The learner is no longer trying to infer labels from the projectors alone; it is using an additional physical convention.

\subsection{Scope and assumptions of the model}\label{subsec:scope}

The theorem uses a deliberately idealized model. This idealization is useful because it removes several familiar difficulties and leaves only the reference-frame obstruction. The assumptions are the following:
\begin{itemize}
\item The training states are treated as exact projectors. This means that we are not proving a tomography lower bound. If the projectors are unknown and only finitely many copies are supplied, then the learner also faces ordinary estimation noise. Such noise can only make the problem harder. The theorem asks what remains impossible even after this statistical difficulty is removed.

\item The output of training is allowed to be any POVM effect. Thus, we do not assume a particular circuit ansatz, optimizer, loss function, or hardware connectivity. A realistic QML algorithm may be much more constrained, but those constraints are not needed for the result. The no-go theorem applies to the most powerful possible post-training classifier compatible with the reference-free condition.

\item The reference-free condition is imposed on the learning rule, not on the target concept as imagined by an external observer. An observer may write down a target function using a basis. The theorem says that if this basis is not supplied to the learner by either data or side information, then a reference-free learning rule cannot reconstruct it. This distinction is important: the target may be mathematically well-defined in the author's notation while still being operationally under-specified for the learner.

\item The theorem is not a statement about all possible generalization. It is a statement about reference-free generalization. Practical success is possible when the concept class has structure. For example, if the target is constant on the off-span sector, then the collapse to one value is not harmful. If the target is determined by a local observable, then the observable itself is a reference. If the target is defined on a low-dimensional manifold of feature states, then the feature map supplies the manifold and its coordinates. In all of these cases, generalization is possible because the problem contains more than a small set of unlabeled rays.
\end{itemize}

These assumptions make the theorem easy to falsify operationally: to avoid the conclusion, one must point to a resource that breaks the residual symmetry. This is a strength rather than a weakness. It converts a vague question, ``Can a QML model generalize in an exponentially large Hilbert space?'' into a sharper question, ``What reference structure or inductive bias gives semantic meaning to the directions on which it generalizes?''

\subsection{Examples and non-examples}

It is useful to separate three cases: First, a fully reference-free rule may depend only on invariant relational data among the training rays, such as transition probabilities $\Tr(\hP_i\hP_j)$ and labels $y_i$, and then output a classifier constructed covariantly from the projectors themselves. Such a rule can orient the span of the training states, but it cannot orient directions that are completely orthogonal to that span. Second, a standard quantum kernel classifier is usually referenceful, because the feature map $x\mapsto \ket{\phi(x)}$ is fixed in advance and is implemented by a particular circuit in a particular basis~\cite{schuld2019feature,havlicek2019supervised}. The kernel matrix contains relational information among the embedded training states, but the embedding circuit also supplies a coordinate convention telling the learner which physical states correspond to which classical inputs. Our theorem does not forbid generalization in this setting; rather, it says that the feature map is part of the reference structure that makes such generalization meaningful. Third, a variational classifier with a fixed ansatz and measurement is also referenceful. The gates, connectivity, and final observable define preferred axes in Hilbert space. Such choices may be excellent inductive biases. They may also be essential for trainability and generalization~\cite{larocca2022group,meyer2023exploiting,nguyen2024equivariant}. The point of the theorem is that these choices should not be treated as empty background: they are physical or architectural structures that break the unitary gauge freedom left by the data alone.

\subsection{The stabilizer of the training data}

Given a dataset $D$, define its stabilizer group by
\begin{eqnarray}
G_D:=\{\hV \in U(\cH):\hV \hP_i \hV^\dagger = \hP_i,\; \forall i=1,\ldots,m\}.
\label{eq:stabilizer}
\end{eqnarray}
The group $G_D$ consists of all Hilbert-space rotations that leave the training data exactly unchanged. These rotations are not distinguishable by the training examples. If a reference-free learner produces the predictions that are changed under such a rotation, it would creates an orientation not present in the data.

Let
\begin{eqnarray}
S_D:=\operatorname{span}\{\ket{\psi_i}:i=1,\ldots,m\}, ~~ r_D:=\dim S_D.
\label{eq:span-rank}
\end{eqnarray}
The number $r_D$ is the reference rank of the dataset. It is bounded by the number of distinct training directions, and hence by $m$. If $r_D<d$, every unitary of the form
\begin{eqnarray}
\hV=\hPi_D\oplus \hW_{D^\perp} \quad \bigl(\hW_{D^\perp}\in U(S_D^\perp) \bigr)
\label{eq:complement-unitaries}
\end{eqnarray}
belongs to $G_D$, where $\hPi_D$ is the projector onto $S_D$. Thus, an entire unitary group remains unbroken on the unspanned sector.

\begin{figure*}[t]
\includegraphics[width=0.70\textwidth]{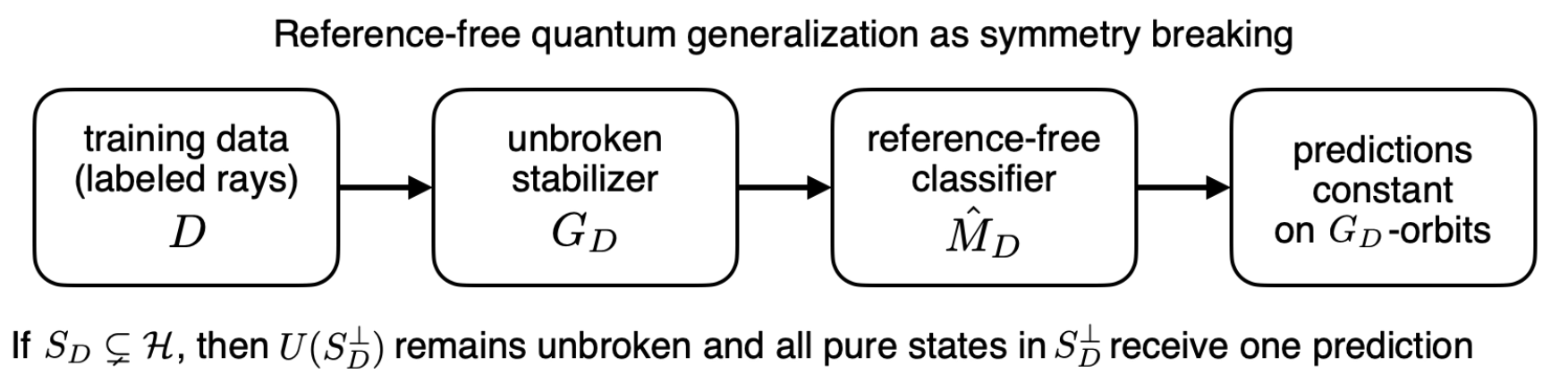}
\caption{Reference-free quantum generalization as symmetry breaking. The training data $D$ orient only the subspace and directions they physically contain. Any unitary symmetry $G_D$ left unbroken by the training data must also be left unbroken by a reference-free classifier. If the training span $S_D$ is a proper subspace of $\cH$, the full unitary group on $S_D^\perp$ remains unbroken, forcing all pure off-span test states to receive the same prediction.}
\label{fig:no-go-map}
\end{figure*}

The reference rank should be distinguished from the number of copies. If a learner receives many copies of a small number of training directions, those copies may improve the estimation accuracy, but they do not orient new directions. In the idealized projector model used here, the estimation has already been made perfect; the only remaining question is whether the physical directions have been fixed. The quantity $r_D$ measures that orientation content. Fig.~\ref{fig:no-go-map} summarizes this logic. The training data break part of the global unitary symmetry; the unbroken stabilizer then determines which test states must receive identical predictions.

\section{Main no-go theorem}\label{sec:main}

\subsection{Stabilizer-orbit invariance}

\begin{theorem}[Stabilizer-orbit no-go theorem]\label{thm:stabilizer}
Let $D\mapsto \hM_D$ be a reference-free learner in the sense of Definition~\ref{def:reference-free}. Then, for every $\hV\in G_D$,
\begin{eqnarray}
\hM_D=\hV\hM_D\hV^\dagger.
\label{eq:commuting-effect}
\end{eqnarray}
Consequently, for every test state $\hrho$,
\begin{eqnarray}
p_D(1|\hV\hrho\hV^\dagger)=p_D(1|\hrho), \quad \forall \hV \in G_D.
\label{eq:orbit-invariance}
\end{eqnarray}
\end{theorem}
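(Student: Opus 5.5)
The plan is to apply the covariance condition of Definition~\ref{def:reference-free} directly to a unitary $\hV$ drawn from the stabilizer $G_D$. We then use the fact that such a $\hV$ leaves the dataset unchanged as a labeled list.

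First, fix $\hV\in G_D$. By Eq.~\eqref{eq:transformed-data}, the transformed dataset is
\begin{eqnarray}
\hV D\hV^\dagger=\{(\hV\hP_i\hV^\dagger,y_i)\}_{i=1}^m .
\end{eqnarray}
By the definition of $G_D$ in Eq.~\eqref{eq:stabilizer}, each $\hV\hP_i\hV^\dagger=\hP_i$. The labels are untouched by construction. Hence $\hV D\hV^\dagger=D$ entry by entry, including the ordering and any repetitions. The learning rule is a function of the dataset, so it returns the same output on identical inputs: $\hM_{\hV D\hV^\dagger}=\hM_D$. Combining this with Eq.~\eqref{eq:covariance} gives $\hM_D=\hM_{\hV D\hV^\dagger}=\hV\hM_D\hV^\dagger$, which is Eq.~\eqref{eq:commuting-effect}.

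Second, for the orbit statement, take any test state $\hrho$. Using Eq.~\eqref{eq:probability} and cyclicity of the trace,
\begin{eqnarray}
p_D(1|\hV\hrho\hV^\dagger)=\Tr(\hM_D\hV\hrho\hV^\dagger)=\Tr(\hV^\dagger\hM_D\hV\hrho).
\end{eqnarray}
Applying the first step to $\hV^\dagger$, which also lies in $G_D$ because $G_D$ is closed under inverses, gives $\hV^\dagger\hM_D\hV=\hM_D$. Alternatively, one can conjugate Eq.~\eqref{eq:commuting-effect} by $\hV^\dagger$. Either way, the right-hand side reduces to $\Tr(\hM_D\hrho)=p_D(1|\hrho)$, which is Eq.~\eqref{eq:orbit-invariance}. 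The same identity then holds for the label-$0$ probability $1-p_D(1|\cdot)$.

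The only step needing care is the identification $\hV D\hV^\dagger=D$. It requires treating the dataset as the list of projectors together with their labels, with no phase-dependent representative vectors. This is exactly why the paper uses projectors in Eq.~\eqref{eq:pure-projector}: a stabilizer element may multiply $\ket{\psi_i}$ by a phase, but it fixes $\hP_i$. With that convention there is no real obstacle, and the argument is essentially the standard fact that an equivariant map sends a point to a vector fixed by the point's stabilizer.
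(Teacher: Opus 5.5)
Your proposal is correct and follows the paper's proof essentially verbatim: you use $\hV D\hV^\dagger=D$ for $\hV\in G_D$ together with covariance to obtain $\hM_D=\hV\hM_D\hV^\dagger$, and then apply cyclicity of the trace. You also make explicit the step $\hV^\dagger\hM_D\hV=\hM_D$ (via $\hV^\dagger\in G_D$ or by conjugating Eq.~\eqref{eq:commuting-effect}), which the paper's final trace equality uses without stating it.
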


\begin{proof}---If $\hV \in G_D$, then $\hV D\hV^\dagger=D$. By reference-free covariance,
\begin{eqnarray}
\hM_D=\hM_{\hV D\hV^\dagger}=\hV\hM_D\hV^\dagger,
\label{eq:proof-stabilizer-1}
\end{eqnarray}
which proves Eq.~\eqref{eq:commuting-effect}. Hence,
\begin{eqnarray}
p_D(1|\hV\hrho\hV^\dagger) &=&\Tr(\hM_D\hV\hrho\hV^\dagger) = \Tr(\hV^\dagger\hM_D\hV\hrho) \nonumber\\
	&=& \Tr(\hM_D\hrho)=p_D(1|\hrho).
\label{eq:proof-stabilizer-2}
\end{eqnarray}
The proof is completed.
\end{proof}

The theorem is elementary, but its interpretation is strong. The classifier learned from $D$ cannot assign different predictions to states in the same $G_D$ orbit. Therefore, a target concept $c:\proj(\cH)\to\{0,1\}$ is exactly learnable from $D$ by a reference-free learner only if it is constant on the stabilizer orbits that remain after training.

\begin{corollary}[Orbit obstruction to exact concept learning]\label{cor:orbit-concept}
Suppose there exist pure states $\hP$ and $\hQ$ and a unitary $\hV \in G_D$ such that $\hQ = \hV \hP \hV^\dagger$, but a target concept satisfies $c(\hP) \neq c(\hQ)$. Then, no reference-free learner can implement this target concept exactly from the dataset $D$.
\end{corollary}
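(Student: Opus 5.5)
The plan is a proof by contradiction that reduces the corollary directly to Theorem~\ref{thm:stabilizer}. Suppose some reference-free learner $D\mapsto \hM_D$ implements $c$ exactly from $D$. First I will make ``implements exactly'' precise in the natural operational sense: for every pure state $\hP'$,
\begin{eqnarray}
p_D(1|\hP')=\Tr(\hM_D\hP')=c(\hP').
\end{eqnarray}
In particular the prediction is deterministic, equal to $0$ or $1$, and agrees with the target on every pure input. If the paper's intended notion is weaker, for instance thresholding at $1/2$, the argument below is unchanged, because it only uses that equal outcome probabilities force equal predicted labels.

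Next, apply Eq.~\eqref{eq:orbit-invariance} of Theorem~\ref{thm:stabilizer} with $\hrho=\hP$ and the given $\hV\in G_D$. This yields
\begin{eqnarray}
p_D(1|\hQ)=p_D(1|\hV\hP\hV^\dagger)=p_D(1|\hP).
\end{eqnarray}
Combining this with exactness gives $c(\hQ)=p_D(1|\hQ)=p_D(1|\hP)=c(\hP)$. This contradicts the hypothesis $c(\hP)\neq c(\hQ)$, so no such learner exists.

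The only step needing care is the first one: fixing the definition of exact implementation so that it pins down a unique label from $p_D(1|\cdot)$. Any decision rule that is a function of the outcome probability alone works, since Theorem~\ref{thm:stabilizer} makes those probabilities identical on $\hP$ and $\hQ$.

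For randomized learners, I would note that the same argument applies to each realization of $\hM_D$. Alternatively, if the learner's output distribution is itself covariant, it applies to the averaged effect. This is deferred to Appendix~\ref{app:randomized}.
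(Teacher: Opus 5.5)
Your proof is correct and is essentially the paper's own argument: the corollary is read off directly from the orbit invariance Eq.~\eqref{eq:orbit-invariance} of Theorem~\ref{thm:stabilizer} with $\hrho=\hP$, so any decision rule depending only on $p_D(1|\cdot)$ must give $\hP$ and $\hQ$ the same label, contradicting $c(\hP)\neq c(\hQ)$. The only small imprecision is in your randomized aside: the per-realization version requires each seed's effect $\hM_{D,\omega}$ to be covariant separately, as Appendix~\ref{app:randomized} states, and does not follow from covariance of the output distribution alone.
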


This is the first form of the no-go theorem. It is not about finite sample noise, because it holds even if the training data are known exactly. It is not about limited computational power, because no restriction has been placed on the training algorithm. It is a symmetry constraint. The target function itself may be perfectly well-defined to an external observer who has a basis in mind; the statement is that this basis has not been supplied to the reference-free learner by the dataset.

\subsection{Off-span blindness}

The most transparent consequence appears when the training states do not span the full Hilbert space. Then, the stabilizer contains the full unitary group acting on the orthogonal complement. This leaves no directional invariant inside that complement. The only scalar quantity available to a two-outcome classifier is a single number.

\begin{theorem}[No off-span quantum generalization]\label{thm:offspan}
Let $D\mapsto\hM_D$ be reference-free and let $S_D$ be the span of the training vectors. If $r_D=\dim S_D<d$, then there exists a number $\lambda_D \in [0,1]$, such that
\begin{eqnarray}
\hM_D=\hPi_D\hM_D\hPi_D+\lambda_D\hPi_D^\perp,
\label{eq:offspan-block}
\end{eqnarray}
where $\hPi_D$ projects onto $S_D$ and $\hPi_D^\perp=\Id-\hPi_D$. In particular, for every pure state $\ket{\phi}\in S_D^\perp$,
\begin{eqnarray}
p_D(1|\hP_\phi)=\lambda_D.
\label{eq:offspan-constant}
\end{eqnarray}
\end{theorem}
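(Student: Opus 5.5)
The plan is to apply Theorem~\ref{thm:stabilizer} to two explicit families of elements of $G_D$ coming from Eq.~\eqref{eq:complement-unitaries}. Write $\hM_D$ in block form relative to the orthogonal decomposition $\cH=S_D\oplus S_D^\perp$:
\begin{eqnarray}
\hM_D=\begin{pmatrix} A & B\\ B^\dagger & C\end{pmatrix},
\end{eqnarray}
where $A=\hPi_D\hM_D\hPi_D$, $B=\hPi_D\hM_D\hPi_D^\perp$ and $C=\hPi_D^\perp\hM_D\hPi_D^\perp$. Every $\hV=\hPi_D\oplus\hW$ with $\hW\in U(S_D^\perp)$ fixes each training vector, so it fixes each $\hP_i$ and lies in $G_D$. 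Theorem~\ref{thm:stabilizer} then gives $\hV\hM_D\hV^\dagger=\hM_D$. In blocks this reads $A=A$, $B\hW^\dagger=B$ and $\hW C\hW^\dagger=C$, for every unitary $\hW$ on $S_D^\perp$.

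First, I kill the off-diagonal block. I take $\hW=-\hPi_D^\perp$, which is the choice $\hV=\hPi_D-\hPi_D^\perp$. The condition $B\hW^\dagger=B$ becomes $-B=B$, so $B=0$. This step uses only that the complement is nontrivial.

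Second, I show that the complement block is scalar. The operator $C$ is Hermitian on $S_D^\perp$ and commutes with every unitary there. I would avoid citing Schur's lemma and instead argue directly. Let $\ket{\phi}$ be any eigenvector of $C$, with eigenvalue $\lambda_D$. For any other unit vector $\ket{\chi}\in S_D^\perp$, choose a unitary $\hW$ on $S_D^\perp$ with $\hW\ket{\phi}=\ket{\chi}$; such a unitary exists by extending both vectors to orthonormal bases. Then
\begin{eqnarray}
C\ket{\chi}=C\hW\ket{\phi}=\hW C\ket{\phi}=\lambda_D\ket{\chi},
\end{eqnarray}
so every vector of $S_D^\perp$ is an eigenvector with the same eigenvalue. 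Hence $C=\lambda_D\hPi_D^\perp$, which gives Eq.~\eqref{eq:offspan-block}.

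Finally, I check that $\lambda_D\in[0,1]$ and derive Eq.~\eqref{eq:offspan-constant}. For a unit vector $\ket{\phi}\in S_D^\perp$ we have $\hPi_D\ket{\phi}=0$, so
\begin{eqnarray}
p_D(1|\hP_\phi)=\bra{\phi}\hM_D\ket{\phi}=\lambda_D.
\end{eqnarray}
Since $0\le\hM_D\le\Id$, this expectation value lies in $[0,1]$, and therefore so does $\lambda_D$.

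There is no real obstacle in this argument. The only care needed is in the scalar step, specifically in using the transitivity of $U(S_D^\perp)$ on unit vectors. When $\dim S_D^\perp=1$ that step is trivial, and no additional assumption on $d-r_D$ is required.
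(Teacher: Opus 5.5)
Your proof is correct and follows the paper's argument. You block-decompose $\hM_D$ along $S_D\oplus S_D^\perp$, use a scalar unitary on the complement to kill the off-diagonal blocks (your $\hW=-\Id_{S_D^\perp}$ is the paper's $e^{i\theta}\Id_{S_D^\perp}$ with $\theta=\pi$), and show the complement block is scalar, where your eigenvector-plus-transitivity argument is simply an inlined, self-contained proof of the Schur's-lemma step the paper cites for the irreducible defining representation of $U(S_D^\perp)$.
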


\begin{proof}---Every unitary $\hV=\hPi_D\oplus\hW$ with $\hW\in U(S_D^\perp)$ belongs to $G_D$. By Theorem~\ref{thm:stabilizer}, $\hM_D$ commutes with all such $\hV$. Write $\hM_D$ in block form with respect to $\cH=S_D\oplus S_D^\perp$:
\begin{eqnarray}
\hM_D=\begin{pmatrix}
A & B\\
C & E
\end{pmatrix}.
\label{eq:block-decomposition}
\end{eqnarray}
The equality $(\hPi_D \oplus \hW)\hM_D = \hM_D(\hPi_D \oplus \hW)$ gives
\begin{eqnarray}
B\hW=B,
\quad
\hW C=C,
\quad
E\hW=\hW E
\label{eq:block-conditions}
\end{eqnarray}
for every $\hW\in U(S_D^\perp)$. Choosing $\hW=e^{i\theta}\Id_{S_D^\perp}$ with $e^{i\theta} \neq 1$ forces $B=C=0$. The last condition says that $E$ belongs to the commutant of the defining representation of $U(S_D^\perp)$; since this representation is irreducible, Schur's lemma gives $E=\lambda_D\Id_{S_D^\perp}$. The positivity and $\hM_D \le \Id$ imply $\lambda_D \in [0,1]$. This gives Eq.~\eqref{eq:offspan-block}, and Eq.~\eqref{eq:offspan-constant} follows immediately.
\end{proof}

The stabilizer twirl and its explicit block form are detailed in Appendix~\ref{app:twirl}.

The theorem says that the whole unspanned sector is invisible except for one scalar number. The classifier may learn a complicated effect on $S_D$, but on $S_D^\perp$ it has no directional resolution at all. In other words, the off-span sector is not merely poorly sampled; it is still gauge-equivalent under a symmetry that the training data have not broken.

This gives a clean example separating our no-go theorem from state discrimination. Let $\ket{a},\ket{b}\in S_D^\perp$ with $\braket{a}{b}=0$. A measurement in the basis containing $\ket{a}$ and $\ket{b}$ can distinguish them perfectly. Nevertheless, by Eq.~\eqref{eq:offspan-constant}, a reference-free learner trained on $D$ must satisfy
\begin{eqnarray}
p_D(1|\hP_a)=p_D(1|\hP_b)=\lambda_D.
\label{eq:orthogonal-same}
\end{eqnarray}
The difficulty is not distinguishing the states; the difficulty is that the labels of those directions have not been oriented by the training data.

A useful way to phrase the theorem is the following. Quantum mechanics permits a measurement that distinguishes $\ket{a}$ from $\ket{b}$, but the learning problem has not specified why that measurement should be the classifier. A referenceful learner may choose such a measurement because the basis is built into the hardware, the feature map, or the task description. A reference-free learner has no such right. It can only use the reference frame supplied by the data, and the data do not contain $\ket{a}$ or $\ket{b}$ as oriented directions.

\subsection{A simple worked example}\label{subsec:worked-example}

Consider the smallest example where an off-span sector has internal directions: a qutrit Hilbert space $\cH=\mathbb{C}^3$ with one training state $\hP_1=\ketbra{0}{0}$. Then, $S_D=\operatorname{span}\{\ket{0}\}$ and $S_D^\perp=\operatorname{span}\{\ket{1},\ket{2}\}$. The stabilizer contains every unitary of the form
\begin{eqnarray}
\hV=\ket{0}\!\bra{0}\oplus \hW, \quad \hW\in U(2).
\label{eq:qutrit-stabilizer}
\end{eqnarray}
A reference-free classifier must commute with all such $\hV$. Hence, it has the form
\begin{eqnarray}
\hM_D = a\ketbra{0}{0} + \lambda\left(\ketbra{1}{1} + \ketbra{2}{2}\right),~(0 \le a, ~\lambda \le 1).
\label{eq:qutrit-effect}
\end{eqnarray}
For any normalized vector in the complement,
\begin{eqnarray}
\ket{\phi(\theta,\varphi)}=\cos\theta\ket{1}+e^{i\varphi}\sin\theta\ket{2},
\label{eq:qutrit-complement-state}
\end{eqnarray}
one obtains
\begin{eqnarray}
\Tr\bigl(\hM_D \ketbra{\phi(\theta,\varphi)}{\phi(\theta,\varphi)}\bigr) = \lambda,
\label{eq:qutrit-constant}
\end{eqnarray}
independent of both angles. Thus, $\ket{1}$ and $\ket{2}$, which are orthogonal, receive the same prediction.

An external observer may say that the concept is $c(\ket{1})=0$ and $c(\ket{2})=1$. This is a perfectly valid referenceful concept, because the observer has chosen the basis $\{\ket{0},\ket{1},\ket{2}\}$. But the dataset containing only $\ket{0}$ has not supplied that basis inside the two-dimensional complement. The unitary that swaps $\ket{1}$ and $\ket{2}$ leaves the training data unchanged. A reference-free learner cannot assign different labels to the two states without using a basis that came from somewhere else.

This example captures the whole theorem in finite-dimensional form. The unspanned complement is not empty space; it may contain many mutually orthogonal and perfectly distinguishable states. What it lacks is an orientation relative to the labels.

\subsection{Approximate reference-free learning}

The realistic learning rules may not be exactly covariant. A weak external reference, noise, calibration bias, or architecture-dependent preference may break the symmetry. The exact theorem has a stable approximate version.

\begin{definition}[Stabilizer covariance defect]\label{def:defect}
For a learned effect $\hM_D$, define its covariance defect on the stabilizer of $D$ by
\begin{eqnarray}
\delta_D := \sup_{\hV \in G_D}\norm{\hM_D - \hV \hM_D \hV^\dagger}_\infty.
\label{eq:defect}
\end{eqnarray}
\end{definition}

The number $\delta_D$ measures how much the final classifier breaks symmetries that the training data themselves do not break. If $\delta_D=0$, the classifier is exactly reference-free on the stabilizer. If $\delta_D$ is small, then the classifier contains only a weak reference leakage along the unbroken directions.

\begin{theorem}[Robust orbit no-go theorem]\label{thm:robust}
If $\delta_D\le\eta$, then for every test state $\hrho$ and every $\hV \in G_D$,
\begin{eqnarray}
\abs{p_D(1 | \hV \hrho \hV^\dagger) - p_D(1 | \hrho)} \le \eta.
\label{eq:robust-orbit}
\end{eqnarray}
If, in addition, $S_D\neq\cH$, then there exist an effect $A_D$ on $S_D$ and a scalar $\lambda_D \in [0,1]$, such that
\begin{eqnarray}
\norm{\hM_D - \left( A_D \oplus \lambda_D \Id_{S_D^\perp}\right)}_\infty \le \eta.
\label{eq:robust-block}
\end{eqnarray}
Consequently, for every pure $\ket{\phi} \in S_D^\perp$,
\begin{eqnarray}
\abs{p_D(1 | \hP_\phi) - \lambda_D} \le \eta.
\label{eq:robust-offspan}
\end{eqnarray}
\end{theorem}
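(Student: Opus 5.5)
The plan has three steps, one for each conclusion. The first bound (the orbit statement) follows from a trace-norm duality estimate. The block approximation follows from averaging $\hM_D$ over the unitaries acting only on $S_D^\perp$. The off-span estimate is then an immediate corollary of the block approximation.

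\emph{Orbit bound.} Fix $\hV\in G_D$ and a state $\hrho$. By cyclicity of the trace,
\begin{eqnarray}
p_D(1|\hV\hrho\hV^\dagger)-p_D(1|\hrho)=\Tr\bigl[(\hV^\dagger\hM_D\hV-\hM_D)\hrho\bigr].
\end{eqnarray}
H\"older's inequality $\abs{\Tr(X\hrho)}\le\norm{X}_\infty\norm{\hrho}_1$ with $\norm{\hrho}_1=1$ bounds this by $\norm{\hM_D-\hV^\dagger\hM_D\hV}_\infty$. Since $G_D$ is a group, $\hV^\dagger\in G_D$, so this norm is at most $\delta_D\le\eta$.

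\emph{Block approximation.} Let $K:=\{\hPi_D\oplus\hW:\hW\in U(S_D^\perp)\}$. This is a compact subgroup of $G_D$, as noted in Eq.~\eqref{eq:complement-unitaries}. Define the twirl
\begin{eqnarray}
\bar M:=\int_K \hV\hM_D\hV^\dagger\,d\hV
\end{eqnarray}
with respect to normalized Haar measure (cf.\ Appendix~\ref{app:twirl}). I will establish three facts about $\bar M$.
\begin{itemize}
\item It is an effect. It is a convex mixture of the effects $\hV\hM_D\hV^\dagger$, and the set $\{0\le X\le\Id\}$ is closed and convex, so $0\le\bar M\le\Id$.
\item It commutes with every element of $K$, by invariance of the Haar measure. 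The block-matrix and Schur's-lemma argument from the proof of Theorem~\ref{thm:offspan} then applies verbatim to $\bar M$. Hence $\bar M=A_D\oplus\lambda_D\Id_{S_D^\perp}$, with $A_D=\hPi_D\hM_D\hPi_D$ and $\lambda_D=\Tr(\hPi_D^\perp\hM_D)/(d-r_D)$. Because $\bar M$ is an effect, $A_D$ is an effect on $S_D$ and $\lambda_D\in[0,1]$.
\item It is $\eta$-close to $\hM_D$. Convexity of the operator norm under the Bochner integral gives
\begin{eqnarray}
\norm{\hM_D-\bar M}_\infty\le\int_K\norm{\hM_D-\hV\hM_D\hV^\dagger}_\infty\,d\hV\le\delta_D\le\eta.
\end{eqnarray}
\end{itemize}
Together these give Eq.~\eqref{eq:robust-block}.

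\emph{Off-span estimate.} For a pure $\ket{\phi}\in S_D^\perp$ we have $\Tr(\bar M\hP_\phi)=\lambda_D$. Applying the same H\"older estimate to $\hM_D-\bar M$ yields $\abs{p_D(1|\hP_\phi)-\lambda_D}\le\eta$, which is Eq.~\eqref{eq:robust-offspan}.

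The only step needing care is the twirl. It requires confirming that the integral is well defined, which holds because $K\cong U(d-r_D)$ is compact and the integrand is continuous. It also requires justifying the exchange of norm and integral, and checking that the effect property and the exact commutation survive averaging. I would state these as the key lemmas; once they hold, the exact structure theorem (Theorem~\ref{thm:offspan}) supplies the block form directly.
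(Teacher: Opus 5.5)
Your proposal is correct and takes essentially the same route as the paper's proof in Appendix~\ref{app:robust}. That route is an operator-norm/trace estimate for the orbit bound, then a Haar twirl over $\{\hPi_D\oplus\hW:\hW\in U(S_D^\perp)\}$ giving an effect $A_D\oplus\lambda_D\Id_{S_D^\perp}$ that lies within $\eta$ of $\hM_D$ by the integral triangle inequality, and finally the off-span bound as an immediate corollary. Your explicit formulas for $A_D$ and $\lambda_D$, and your remark that $\hV^\dagger\in G_D$ (equivalently, unitary invariance of $\norm{\cdot}_\infty$), only spell out steps the paper leaves implicit.
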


The proof is given in Appendix~\ref{app:robust}. It uses the Haar average over the unbroken group $U(S_D^\perp)$. The conclusion is intuitive: small symmetry breaking can create only small prediction variation along the directions that were symmetry-equivalent in the ideal reference-free limit. Thus, the exact theorem is not a fragile artifact of the perfect covariance. It has a quantitative form that can be interpreted as a trade-off between the reference leakage and the off-span prediction variation.

\section{Exponential reference-rank barrier for generic quantum concepts}\label{sec:lower-bound}

The stabilizer theorem becomes a QML lower bound when applied to generic concept classes in a $d=2^n$ dimensional Hilbert space. The relevant resource is not merely the number of examples, but the rank of the quantum reference frame they span.

\subsection{A basis-label concept class}

Fix an orthonormal basis
\begin{eqnarray}
\cB=\{\ket{e_1}, \ldots, \ket{e_d}\}
\label{eq:basis}
\end{eqnarray}
and consider binary concepts that assign labels to the basis states:
\begin{eqnarray}
y=(y_1,\ldots,y_d) \in \{0,1\}^d, \quad c_y(\hP_{e_j}) = y_j.
\label{eq:basis-labels}
\end{eqnarray}
The test distribution is uniform over the basis projectors $\{\hP_{e_j}\}_{j=1}^d$. This concept class is deliberately generic. It contains no smoothness, locality, or low-complexity structure. Its role is to expose what happens if one tries to treat the whole Hilbert space as an unstructured quantum feature space.

Suppose that the training set reveals labels on a subset $I\subseteq\{1,\ldots,d\}$ and that the corresponding training states span
\begin{eqnarray}
S_I=\operatorname{span}\{\ket{e_i}:i\in I\}, \quad r=|I|.
\label{eq:training-subset}
\end{eqnarray}
Even if the learner memorizes all training labels, Theorem~\ref{thm:offspan} forces one common prediction on every basis state in $S_I^\perp$.

\begin{theorem}[Reference-rank lower bound]\label{thm:rank-lower}
Consider the basis-label concept class in Eq.~\eqref{eq:basis-labels} with uniform test distribution. Let a reference-free learner be trained on $r$ basis directions and suppose $r<d$. Then, among balanced labelings of the unobserved set $I^c$, there exists a target labeling for which every deterministic threshold classifier obtained from $\hM_D$ has test error at least
\begin{eqnarray}
\Risk_D \ge \frac{1}{d}\left\lfloor\frac{d-r}{2}\right\rfloor.
\label{eq:rank-lower-bound}
\end{eqnarray}
Equivalently, to guarantee uniform error $\Risk_D \le \eps$ over this generic balanced concept class, one must have
\begin{eqnarray}
r \ge d-2 \eps d-1.
\label{eq:rank-requirement}
\end{eqnarray}
For $n$ qubits, $d=2^n$, so constant-error generic reference-free learning requires reference rank $r=\Omega(2^n)$.
\end{theorem}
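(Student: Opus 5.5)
The plan is to reduce everything to Theorem~\ref{thm:offspan} and then run a counting argument over labelings of $I^c$. Since the training states are the basis projectors $\hP_{e_i}$, $i\in I$, their span is $S_I$, so $r_D=r<d$. Theorem~\ref{thm:offspan} applies and gives a scalar $\lambda_D\in[0,1]$ with $p_D(1|\hP_{e_j})=\lambda_D$ for every $j\in I^c$, because each such $\ket{e_j}$ lies in $S_I^\perp$.

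The key point is that $\lambda_D$ depends only on $D$, i.e.\ on the training projectors and the labels $y_i$ for $i\in I$. It does not depend on the labels of the unobserved indices. A deterministic threshold classifier outputs $\mathbb{1}[p_D(1|\hrho)\ge\tau]$ for some threshold $\tau$ that is also fixed by $D$. It therefore assigns one common label $b\in\{0,1\}$ to every $\hP_{e_j}$ with $j\in I^c$.

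Next we choose the target adversarially among balanced labelings of $I^c$. A balanced labeling here means one in which $\lfloor (d-r)/2\rfloor$ indices carry label $0$ and $\lceil (d-r)/2\rceil$ carry label $1$, or the reverse. Whatever the common label $b$ is, the number of indices in $I^c$ with $y_j\neq b$ is at least $\lfloor (d-r)/2\rfloor$. Under the uniform test distribution over the $d$ basis states, the error counted only on $I^c$ is therefore at least $\frac{1}{d}\lfloor (d-r)/2\rfloor$, even if the learner is perfect on $I$. This gives Eq.~\eqref{eq:rank-lower-bound}.

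There is a subtlety about the order of quantifiers: the target must be allowed to depend on $b$. This is legitimate because $D$, and hence $b$, is determined by $y|_I$ alone. We can fix $y|_I$ first and then pick $y|_{I^c}$. The claim "there exists a labeling" then holds for each fixed learner and threshold, and in fact for every balanced labeling as long as the majority side opposes $b$.

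For the equivalent form, suppose $\Risk_D\le\eps$ holds uniformly over the class. Then $\eps\ge \frac1d\lfloor (d-r)/2\rfloor\ge \frac{1}{d}\cdot\frac{d-r-1}{2}$. Rearranging gives $r\ge d-2\eps d-1$, which is Eq.~\eqref{eq:rank-requirement}. The case $r=d$ is allowed trivially. Substituting $d=2^n$, any constant $\eps<1/2$ gives $r\ge(1-2\eps)2^n-1=\Omega(2^n)$.

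The main obstacle is stating the adversarial step cleanly, specifically that the threshold and $\lambda_D$ are independent of $y|_{I^c}$. It may also be worth remarking that randomized outputs, i.e.\ predicting $1$ with probability $\lambda_D$, give expected error exactly $\lambda_D\,\#\{y_j=0\}+(1-\lambda_D)\,\#\{y_j=1\}$ over $d$. This is still at least $\frac1d\lfloor(d-r)/2\rfloor$ on a balanced labeling, so the bound covers that case as well.
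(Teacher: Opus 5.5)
Your proposal is correct and follows the paper's proof: Theorem~\ref{thm:offspan} forces a common value $\lambda_D$, hence a common thresholded label on all of $I^c$, and any balanced labeling of $I^c$ then produces at least $\lfloor (d-r)/2\rfloor$ errors out of $d$ test states. Your explicit step $\lfloor (d-r)/2\rfloor \ge (d-r-1)/2$ makes precise the paper's ``up to integer rounding.'' The quantifier issue you flag is moot, since every balanced labeling already incurs at least $\lfloor (d-r)/2\rfloor$ errors whatever the common label $b$ is.
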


\begin{proof}---By Theorem~\ref{thm:offspan}, every unobserved basis state $\ket{e_j}$ with $j \in I^c$ receives the same probability $\lambda_D$. A deterministic threshold rule therefore assigns the same predicted label to all $d-r$ unobserved basis states. If the true labels on $I^c$ are balanced, at least $\lfloor(d-r)/2\rfloor$ of them disagree with this constant prediction. Since the test distribution is uniform over all $d$ basis states, Eq.~\eqref{eq:rank-lower-bound} follows. Solving Eq.~\eqref{eq:rank-lower-bound} $\le\eps$ gives Eq.~\eqref{eq:rank-requirement} up to the displayed integer rounding.
\end{proof}

The lower bound is a reference-rank bound. If the $m$ training states have rank $r_D\le m$, then $m=\Omega(d)$ is necessary for this generic task. However, the statement is more precise than a sample lower bound: many copies of the same low-rank training directions cannot orient the missing subspace. Conversely, $r$ carefully chosen independent training directions can orient an $r$-dimensional sector even if the number of copies per direction is small in the idealized model.

The preceding statement used a deterministic threshold rule, because this is the most common way to turn a probability into a binary decision. A similar obstruction holds directly for probabilistic predictions. If a learner assigns the same probability $\lambda_D$ to all unobserved basis states and the unseen labels are exactly balanced, then the average absolute loss on the unseen sector is independent of $\lambda_D$:
\begin{eqnarray}
\frac{1}{d}\sum_{j \in I^c}\abs{\lambda_D - y_j} = \frac{d-r}{2d}.
\label{eq:absolute-loss-bound}
\end{eqnarray}
For squared loss, the best choice is $\lambda_D=1/2$, giving
\begin{eqnarray}
\frac{1}{d}\sum_{j \in I^c}(\lambda_D-y_j)^2 \ge \frac{d-r}{4d}.
\label{eq:square-loss-bound}
\end{eqnarray}
Thus, the obstruction is not an artifact of thresholding. The off-span sector permits only one scalar prediction, and one scalar cannot encode an arbitrary balanced binary string. The corresponding soft-loss derivation is given in Appendix~\ref{app:soft-loss}.

\subsection{Minimax interpretation}

The balanced-label condition in Theorem~\ref{thm:rank-lower} is not essential; it is a clean way to express the worst-case obstruction. The unseen sector $I^c$ is a set on which the reference-free learner must output one constant label after thresholding. For any fixed constant prediction, an adversary can choose labels on $I^c$ so that about half of them are wrong. Thus the minimax error contributed by the unspanned sector is proportional to its dimension.

If $d-r$ is large, the learner's performance on the unseen sector is no better than a constant guess for an unstructured binary string. This is exactly what one should expect from the reference-frame viewpoint. The data have not oriented the individual basis directions in $S_I^\perp$, so the labels of those directions are not physical information available to a reference-free rule. The exponential scaling for $n$ qubits follows simply because the number of unstructured directions is exponential.

The theorem also clarifies why a polynomial number of random examples can be useful for structured concept classes but not for arbitrary labels. A structured concept may be constant on large stabilizer orbits or may be determined by a small number of physically meaningful observables. In that case, the target function does not require independent labels for all $d$ basis states. The generic class in Eq.~\eqref{eq:basis-labels}, by contrast, asks for independent semantic meaning in every basis direction. Without a reference that orients those directions, no reference-free learner can infer the missing labels.

\subsection{What the lower bound does and does not say}

Theorem~\ref{thm:rank-lower} does not say that QML is impossible. It says that generic Hilbert-space labels are not learnable without a reference structure. The practical QML avoids the theorem by supplying structure, such as, feature map, computational basis, known measurement, locality, Hamiltonian, ansatz symmetry, etc~\cite{huang2021power,huang2022advantage,caro2023out,nguyen2024equivariant}. These resources reduce the effective concept class. They tell the learner which directions matter and how labels should vary between examples. In this sense, QML generalization is not powered by Hilbert-space dimension alone; it is powered by the combination of Hilbert-space dimension with physical reference structure.

This view also clarifies why structured QML generalization bounds can be much better than the generic bound above. The bounds based on trainable gates, effective dimension, Fisher information, or data symmetries quantify the capacity of a specified model class~\cite{caro2022generalization,abbas2021power,haug2024generalization}. Our theorem addresses the complementary question: before capacity can help, what reference structure makes the target function a well-defined learnable object?

There is no contradiction between the reference-rank barrier and successful quantum kernel methods. In kernel QML, the feature map defines a structured family of states indexed by classical data. The learner is not asked to label arbitrary rays in $\cH$; it is asked to generalize on the image of a particular map. That image, together with the prescribed kernel, is a reference structure. The theorem says that if one removes that structure and keeps only a few labeled rays, the rest of Hilbert space does not become a meaningful feature space by default.

\subsection{Structured escape routes}\label{subsec:escape-routes}

It is helpful to list explicitly how structured QML tasks avoid the generic lower bound. The first escape route is a low-dimensional data manifold. If all test states are promised to lie in a known manifold $\mathcal{M}\subset\proj(\cH)$, then the relevant question is not whether the learner can label all of $\proj(\cH)$, but whether the training data and the known manifold structure orient $\mathcal{M}$. A feature map from classical data is the standard example~\cite{schuld2019feature,havlicek2019supervised,huang2021power}.

The second escape route is a known observable. Suppose the target label is determined by the sign of $\Tr(\hO\hrho)$ for a fixed observable $\hO$. Then, the observable is already a reference frame. A learner that is allowed to use $\hO$ is not reference-free in the sense of Definition~\ref{def:reference-free}; it has been told which direction in operator space matters~\cite{huang2022advantage}.

The third escape route is a symmetry prior. In geometric or equivariant QML, one often knows that the task is invariant or equivariant under a group action~\cite{larocca2022group,meyer2023exploiting,nguyen2024equivariant}. This prior reduces the effective hypothesis space. It should not be confused with the stabilizer $G_D$ in our theorem. A task symmetry is a physical or semantic structure supplied by the problem. The stabilizer $G_D$ is instead the residual gauge freedom left by missing information. The former can help generalization; the latter blocks unsupported label variation.

The fourth escape route is a sufficiently high-rank training set. If the data themselves contain many independent directions, they can act as a quantum reference frame. This is the regime in which the reference rank $r_D$ approaches $d$. The exponential lower bound says that, for completely generic labels on $n$ qubits, this is expensive: one needs rank of order $2^n$. For structured labels, the required rank may be much smaller.

These escape routes are not loopholes. They are the physical content of QML. A useful quantum learning algorithm should exploit a structure that is present in the task. The no-go theorem says only that an unstructured Hilbert space does not supply that structure for free.

\subsection{Why a training Gram matrix is not enough}\label{subsec:gram-not-enough}

Many QML methods are naturally expressed in terms of Gram matrices~\cite{schuld2019feature,havlicek2019supervised,thanasilp2024exponential}. Given training states, one can form
\begin{eqnarray}
K_{ij}=\Tr(\hP_i\hP_j)=\abs{\braket{\psi_i}{\psi_j}}^2.
\label{eq:gram-training}
\end{eqnarray}
This matrix contains useful relational information among the training examples. It is invariant under global unitaries and therefore is legitimate data for a reference-free learner. However, $K_{ij}$ contains no information about directions orthogonal to $S_D$. If $\ket{\phi}\in S_D^\perp$, then
\begin{eqnarray}
\Tr(\hP_\phi\hP_i)=0 \quad \forall i,
\label{eq:offspan-zero-kernel}
\end{eqnarray}
so every pure off-span state has the same kernel vector with respect to the training set. From the kernel viewpoint, Theorem~\ref{thm:offspan} says that all such states are indistinguishable as test features unless an external feature map or test-state reference gives additional coordinates.

This observation is often hidden in classical-data kernel learning because a test input $x$ comes with a known method for computing $K(x,x_i)$. The feature map is part of the problem. In the present reference-free setting, a generic quantum test ray in $S_D^\perp$ has the same relationship to every training ray. The training Gram matrix therefore cannot support arbitrary label variation inside the complement.

The same point applies to quantum neural networks viewed as feature-space models. A high-dimensional Hilbert space may contain many possible decision boundaries, but the data-supported relational information may occupy a much smaller sector. A model can exploit the remaining dimensions only if its architecture, measurement, or feature map supplies a reference for them. Otherwise, those dimensions are gauge degrees of freedom rather than learnable features.

\section{Numerical illustrations}\label{sec:numerics}

We include simple illustrations to make the geometry visible. The simulations are finite-dimensional linear-algebra experiments designed to display the consequences of stabilizer twirling and reference-rank deficiency.

\begin{figure}[t]
\includegraphics[width=1.00\columnwidth]{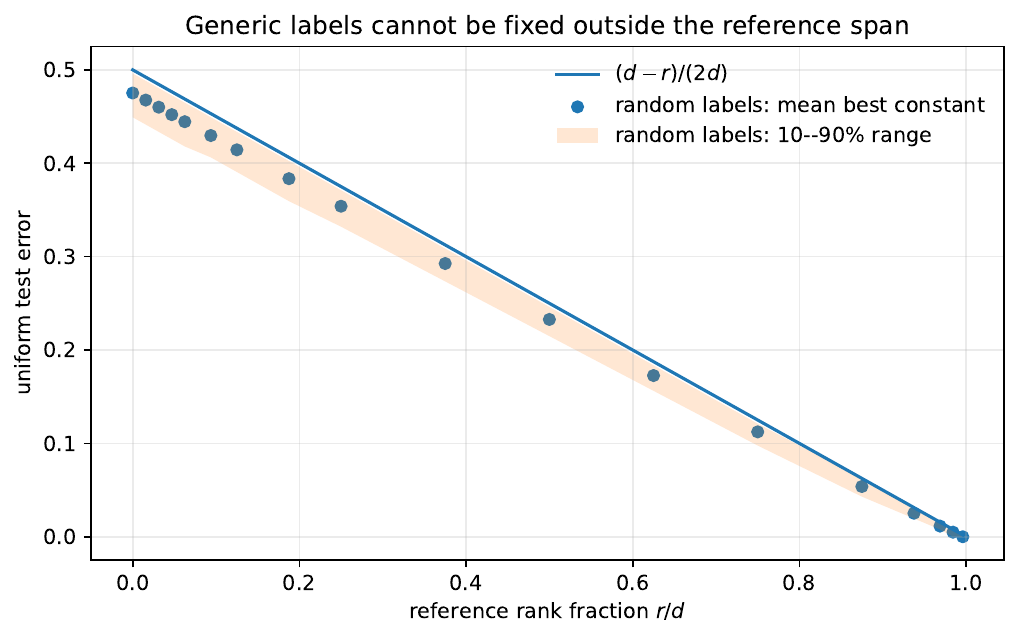}
\caption{Reference-rank barrier for generic labels. For $d=256$, a reference-free learner that has oriented only rank $r$ cannot assign independent labels in the remaining $d-r$ dimensional sector. The balanced-label lower bound scales as $(d-r)/(2d)$. Random-label points show the mean best-constant error on the unseen sector, and the shaded region gives the $10$--$90$ percentile range.}
\label{fig:rank-lower}
\end{figure}

{\em Rank-dependent error for generic labels.}---Fig.~\ref{fig:rank-lower} shows the rank law in Theorem~\ref{thm:rank-lower} for $d=256$. We evaluate several ranks $r$, and the solid curve is
\begin{eqnarray}
E_{\rm bal}(r)=\frac{d-r}{2d}.
\label{eq:app-balanced-curve}
\end{eqnarray}
This is the asymptotic balanced-label lower bound without integer rounding. For the random-label points, each trial draws labels independently from $\{0,1\}$ and assumes the learner memorizes the first $r$ labels. The remaining $d-r$ labels are classified by the better of the two constant labels. Thus the plotted random-label error in a trial is
\begin{eqnarray}
E_{\rm rand}=\frac{1}{d}\min\{N_0,N_1\},
\label{eq:app-random-error}
\end{eqnarray}
where $N_0$ and $N_1$ are the numbers of zeros and ones in the unseen sector. The number of trials is $4000$. This a posteriori choice makes the points slightly lower than the balanced worst-case line for finite $d$, but the same linear dependence on the unspanned fraction is visible.

The shaded band shows the $10$--$90$ percentile range over random labels. Its width shrinks as the unobserved sector becomes smaller. The important feature is the deterministic slope: every missing reference direction contributes one potentially independent unseen label that the reference-free learner cannot orient.

\begin{figure}[t]
\includegraphics[width=1.00\columnwidth]{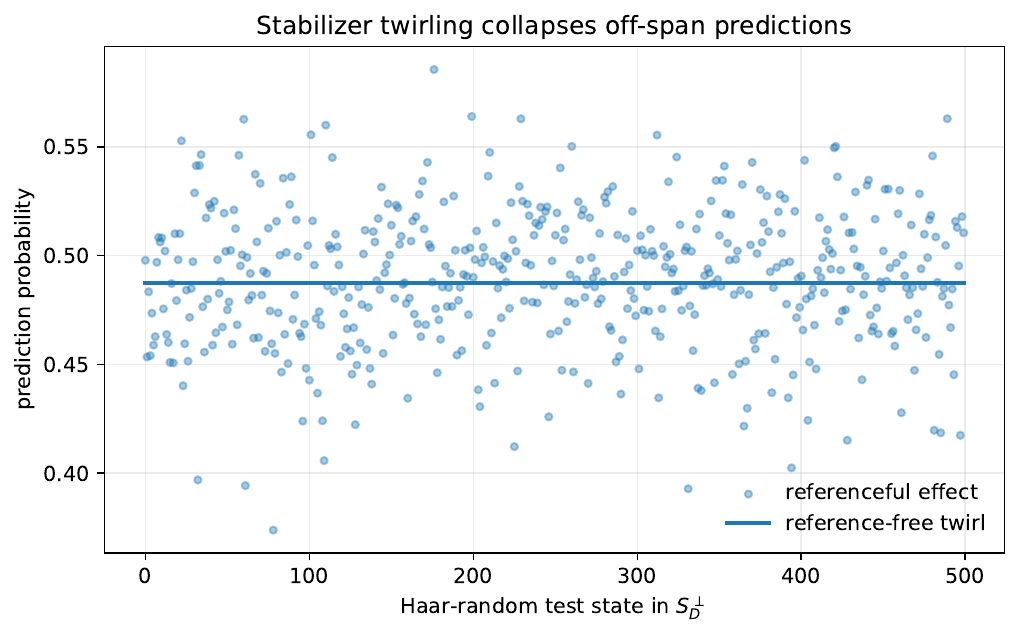}
\caption{Off-span prediction collapse. A referenceful effect can vary over pure states in $S_D^\perp$, but the stabilizer-twirled reference-free effect is scalar on $S_D^\perp$. Thus all pure off-span test states receive exactly the same prediction. The plotted example uses $d=32$, $r=8$, and $500$ Haar-random pure states in the complement.}
\label{fig:offspan-collapse}
\end{figure}

{\em Collapse of off-span predictions.}---Fig.~\ref{fig:offspan-collapse} illustrates Theorem~\ref{thm:offspan}. We set $d=32$ and $r=8$, choose a Haar-random decomposition $\cH=S_D\oplus S_D^\perp$, and sample a random effect $\hM$. The reference-free version is the stabilizer twirl
\begin{eqnarray}
\hM_{\rm twirl}=\hPi_D\hM\hPi_D+\lambda\hPi_D^\perp,
\quad
\lambda=\frac{\Tr(\hPi_D^\perp\hM)}{d-r}.
\label{eq:app-twirl}
\end{eqnarray}
The predictions are evaluated on $500$ Haar-random pure states in $S_D^\perp$. The generic referenceful effect $\hM$ gives varying predictions, whereas the twirled reference-free effect is scalar on $S_D^\perp$, so all off-span predictions collapse to one horizontal line. This is the numerical picture of Eq.~\eqref{eq:offspan-constant}.

The figure should be read as a contrast between two physical situations. The blue cloud corresponds to an effect that contains a preferred basis in the complement. Such an effect can resolve off-span directions, but it is referenceful. The horizontal line corresponds to the effect after the unbroken gauge has been removed by stabilizer twirling. It is the only structure that remains compatible with the data alone.

\begin{figure}[t]
\includegraphics[width=1.00\columnwidth]{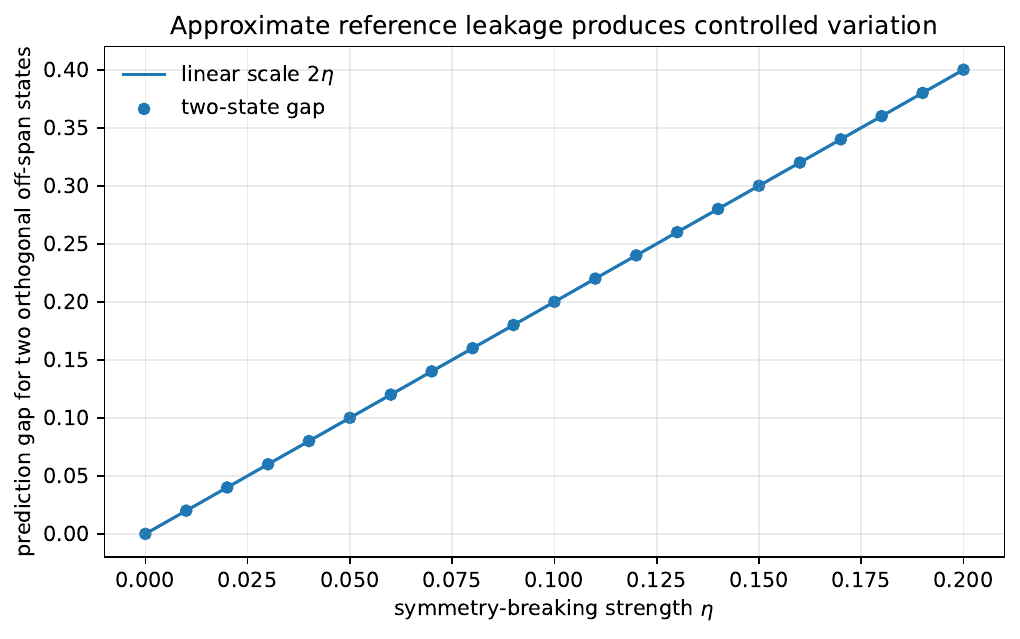}
\caption{Approximate symmetry breaking. A perturbation of strength $\eta$ inside $S_D^\perp$ creates a nonzero prediction gap between two orthogonal off-span states, but the gap remains controlled by the symmetry-breaking scale, consistent with Theorem~\ref{thm:robust}.}
\label{fig:covariance-defect}
\end{figure}

{\em Approximate reference leakage.}---Fig.~\ref{fig:covariance-defect} illustrates the robust theorem. A traceless Hermitian perturbation is inserted inside $S_D^\perp$, creating a weak external orientation in the complement. Two orthogonal complement states are chosen as eigenvectors with opposite eigenvalues, so the prediction gap is linear in the perturbation strength, consistent with the operator-norm bound in Theorem~\ref{thm:robust}. This is the approximate version of the central message: directional generalization outside the training span requires symmetry breaking, and the amount of variation is controlled by the amount of reference leakage.

\begin{figure}[t]
\includegraphics[width=1.00\columnwidth]{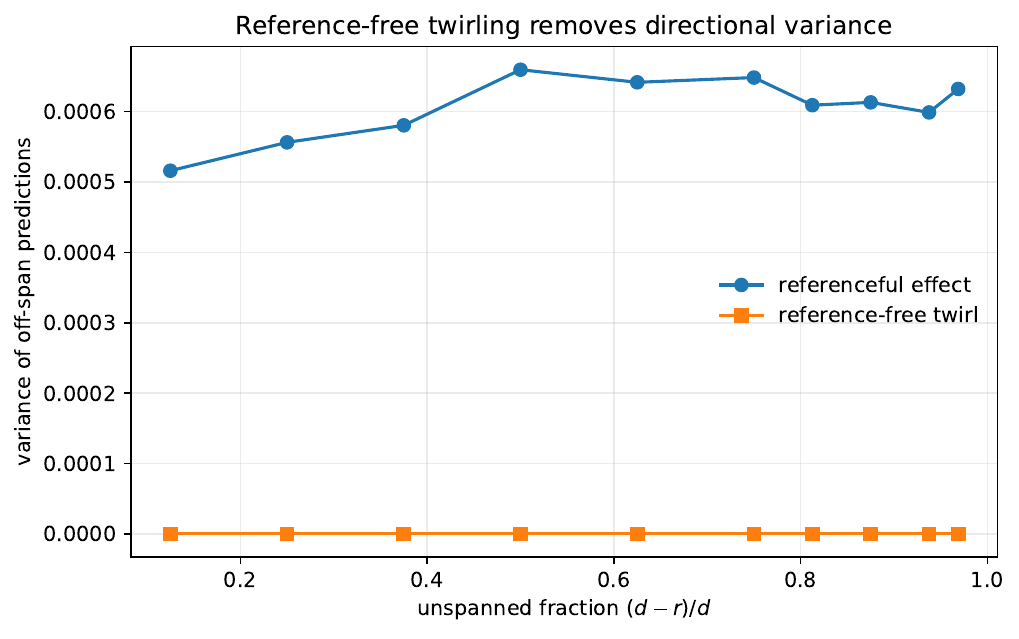}
\caption{Reference-free twirling removes directional variance. For each rank $r$ in a $d=64$ dimensional Hilbert space, a random referenceful effect has nonzero prediction variance over states in $S_D^\perp$. Stabilizer twirling removes this variance, because the twirled effect is scalar on the complement.}
\label{fig:variance-rank}
\end{figure}

{\em Directional variance removed by twirling.}---Fig.~\ref{fig:variance-rank} gives a second view of the same geometry. We set $d=64$ and sample random effects for several values of the reference rank $r$. For each $r$, we evaluate the variance of predictions on Haar-random pure states in $S_D^\perp$ before and after stabilizer twirling. The referenceful effect has nonzero directional variance, while the twirled reference-free effect has zero variance up to numerical precision because Eq.~\eqref{eq:explicit-twirl} makes the complement block scalar. The plot emphasizes that the no-go theorem is not about a particular pair of states; the entire off-span distribution collapses.

\section{Relation to existing no-go and QML results}\label{sec:relations}

\subsection{State discrimination}

Unknown-state discrimination asks whether one can identify which state was prepared from a specified set. Our theorem asks a different question: can a learner assign semantic labels to directions that the training data have not oriented? The answer can be negative even for orthogonal states. Eq.~\eqref{eq:orthogonal-same} shows that two orthogonal test states in $S_D^\perp$ must receive the same reference-free prediction. Once a measurement basis distinguishing them is supplied, the states are easy to discriminate. The no-go is therefore not a distinguishability no-go, but a label-reference no-go.

This distinction is important for interpreting the theorem. State discrimination concerns the physical distinguishability of preparations relative to a given measurement task. Reference-free generalization concerns whether the learning problem has supplied the measurement task itself. Orthogonality solves the first problem but not the second. The missing object is not a measurement outcome; it is the reference that specifies which measurement should be associated with the labels.

\subsection{Barren plateau}

Barren plateau results show that, for certain parameterized quantum circuits, the gradients can become exponentially small with system size, making variational training difficult~\cite{mcclean2018barren,cerezo2021variational}. Our theorem has no landscape. It applies after training, to any final POVM effect that a reference-free learning rule could output. A learner may have perfect optimization and still be unable to assign the labels along a stabilizer orbit.

Conversely, an architecture can avoid barren plateaus and still be subject to the reference-frame issue if it is reference-free and the data leave a large stabilizer. The two obstructions live at different levels. The barren plateaus concern the ability to find parameters. The present theorem concerns what the output classifier is allowed to depend on once parameters have been found.

\subsection{Average no-free-lunch theorem}

Quantum no-free-lunch theorems study average performance limits over ensembles of tasks or processes~\cite{poland2020nofree,sharma2022reformulation,wang2024transition}. The present theorem is dataset-wise. Once a particular $D$ is fixed, the stabilizer $G_D$ determines exactly which directions remain gauge-equivalent. This gives a deterministic obstruction before any averaging over task distributions.

The reference-rank lower bound in Sec.~\ref{sec:lower-bound} has a no-free-lunch flavor because it considers a generic unstructured concept class. Nevertheless, its mechanism is different. It is not an averaged theorem over all functions. It is the consequence of a stabilizer symmetry left unbroken by a fixed dataset. The lower bound then follows because an adversarial or balanced labeling can vary inside an orbit on which the learner is forced to be constant.

\subsection{Connection to quantum reference frames}

The closest physical analogy is the theory of quantum reference frames and asymmetry resources~\cite{bartlett2007reference,gour2008resource}. In that setting, a lack of reference frame imposes covariance or superselection-like restrictions. Here, the same idea is applied to QML generalization. The training data act as a finite quantum reference frame for label structure. If their span is incomplete, a unitary reference freedom remains in the complement. External features, measurements, Hamiltonians, or architecture choices are precisely the resources that break this freedom.

The novelty is the identification of supervised quantum generalization as a reference-frame problem. A training set does two jobs: it provides examples and it breaks symmetry. Generalization outside the data-supported frame is possible only to the extent that some other physical structure has already broken the remaining symmetry.

\subsection{Connection to QML capacity and geometry}

Modern QML theory studies model capacity using quantities such as trainable gates, effective dimension, and data-dependent quantum Fisher information~\cite{caro2022generalization,abbas2021power,haug2024generalization}. These works ask how a specified model generalizes once the input representation and measurement structure are fixed~\cite{banchi2021generalization,caro2023out,gilfuster2024rethinking}. Our theorem asks a prior question: what makes the labels of quantum states physically meaningful in the first place? The two viewpoints are complementary. A model may have high capacity, but if it is reference-free and the data have a large stabilizer, that capacity cannot be used to break the unbroken symmetry.

This distinction is particularly relevant when discussing overparameterized quantum models. A highly expressive ansatz may be able to represent many effects on $\cH$, including effects that separate arbitrary orthogonal states in $S_D^\perp$. But expressibility is not the same as identifiability from reference-free data. If two effects differ only by how they act inside an unbroken orbit, the data alone cannot select between them. Thus the theorem supplies an identifiability constraint complementary to capacity estimates: a model may represent a classifier without the reference-free training problem specifying which representative of that classifier family should be chosen.

Symmetry-aware QML provides a useful comparison~\cite{larocca2022group,meyer2023exploiting,nguyen2024equivariant}. There, one intentionally builds invariances or equivariances into the model because they are known symmetries of the task. In the present theorem, the stabilizer symmetry is different: it is not necessarily a desirable task symmetry, but a residual gauge symmetry caused by missing reference information. A good architecture may deliberately break it by supplying a feature map or basis, or may exploit it if the target concept is genuinely constant on those orbits.

\subsection{Practical QML interpretation}

The theorem gives a diagnostic question for QML proposals: what physical structure supplies the label reference? In a classical-data quantum model, the answer may be the feature map. In a quantum-data classifier for phases of matter, the answer may be locality, a Hamiltonian family, or a symmetry class~\cite{cong2019qcnn}. In a hardware-efficient variational classifier, the answer may be the native gate set and final measurement. In a kernel method, the answer may be the embedding and the kernel relation among inputs~\cite{thanasilp2024exponential}.

This diagnostic explains why they are necessary. A QML model should not merely point to the dimension of Hilbert space; it should specify the reference structure that makes the relevant directions meaningful and learnable. The no-go theorem identifies the cost of omitting this structure.

\section{Discussion and outlook}\label{sec:discussion}

We have proved a no-go theorem for reference-free quantum generalization. If a learner has no external Hilbert-space orientation, then it must be covariant under global unitary changes of reference frame. This simple physical requirement has strong consequences. The learned classifier must commute with the stabilizer of the training data; hence predictions are constant on stabilizer orbits. If the training states do not span the full Hilbert space, then the entire orthogonal complement is collapsed to a single prediction value. Generic labels in that complement are therefore unlearnable without additional reference information.

The central message of this work is that quantum learning requires a physical reference frame. In practical QML, such frames are everywhere: computational basis states, Pauli measurements, data-encoding maps, locality, Hamiltonians, symmetries, or ansatz architectures. These resources are often treated as background choices. Our theorem shows that they are not merely implementation details; they are part of what makes generalization possible.

The central conceptual shift is to view training data as the symmetry-breaking resources. A dataset has two kinds of content. It has statistical content, namely labels and frequencies, and it has orientational content, namely the Hilbert-space directions that it physically fixes. The sample-complexity analyses usually focus on the first. The present theorem isolates the second. If the orientational content is low-rank, then a large part of Hilbert space remains gauge-equivalent and cannot support independent labels.

There is also a practical suggestiveness. When reporting a QML advantage or a QML generalization claim, one should identify the reference structure in the same way one identifies the ansatz and the training distribution. For classical data, this may seem automatic because the data coordinates are given. For quantum data, the coordinates may themselves be physical resources. A statement such as ``the learner generalizes to unseen quantum states'' is incomplete until the family of unseen states and the reference connecting them to the training states have been specified.

This work therefore supports a conservative view of Hilbert-space dimension. The dimension $2^n$ is a reservoir of possible representations, not an automatic source of learnable labels. The useful part of that reservoir is selected by the physical structures of the problem: encodings, observables, symmetries, dynamics, and training examples. In this sense, the result of this work is compatible with both optimism and caution about QML. It allows strong advantages when the reference structure is rich and quantumly accessible, but it rules out a naive advantage based only on the formal size of the state space.

The broad lesson is that the exponential size of Hilbert space is not a free feature space. It is a potential feature space whose directions become meaningful only through physical structure. Quantum generalization is, in this precise sense, a form of symmetry breaking.

\begin{acknowledgments}
This work was supported by the IITP grant funded by the Korean government (RS-2019-II190003) and the Korean ARPA-H Project through KHIDI, funded by the Ministry of Health \& Welfare, Republic of Korea (RS-2025-25456722). We thank the Yonsei University Quantum Computing Project Group for support and access to Quantum System One.
\end{acknowledgments}

\onecolumngrid
\appendix

\section{Proof of the robust theorem}\label{app:robust}

Here, we prove Theorem~\ref{thm:robust}. Suppose
\begin{eqnarray}
\delta_D=\sup_{\hV \in G_D} \norm{\hM_D - \hV \hM_D \hV^\dagger}_\infty \le \eta.
\label{eq:app-defect}
\end{eqnarray}
For any state $\hrho$ and $\hV\in G_D$,
\begin{eqnarray}
\abs{p_D(1|\hV\hrho\hV^\dagger)-p_D(1|\hrho)} = \abs{\Tr\left[(\hV^\dagger\hM_D\hV - \hM_D) \hrho\right]} \le \norm{\hV^\dagger\hM_D\hV-\hM_D}_\infty\Tr\hrho \le \eta,
\label{eq:app-robust-orbit}
\end{eqnarray}
which proves Eq.~\eqref{eq:robust-orbit}. The only ingredients are cyclicity of trace, positivity of $\hrho$, and the definition of the operator norm.

Now assume $S_D \neq \cH$. Let $T:=S_D^\perp$ and average over the subgroup $\{\hPi_D \oplus \hW : \hW \in U(T)\} \subseteq G_D$:
\begin{eqnarray}
\overline{M}_D := \int_{U(T)}(\hPi_D \oplus \hW)\hM_D(\hPi_D \oplus \hW)^\dagger d\Haar(\hW).
\label{eq:haar-average}
\end{eqnarray}
Because $\overline{M}_D$ is a convex average of effects, it is also an effect. Moreover,
\begin{eqnarray}
\norm{\hM_D - \overline{M}_D}_\infty &=& \norm{\int_{U(T)}\left[\hM_D - (\hPi_D\oplus\hW)\hM_D(\hPi_D\oplus\hW)^\dagger\right]d\Haar(\hW)}_\infty \nonumber\\
	&\le& \int_{U(T)}\norm{\hM_D-(\hPi_D\oplus\hW)\hM_D(\hPi_D\oplus\hW)^\dagger}_\infty d\Haar(\hW)
\le\eta.
\label{eq:average-close}
\end{eqnarray}
The same block-commutant argument as in Theorem~\ref{thm:offspan} gives
\begin{eqnarray}
\overline{M}_D = A_D \oplus \lambda_D \Id_T
\label{eq:average-block}
\end{eqnarray}
for some effect $A_D$ on $S_D$ and some $\lambda_D\in[0,1]$. Eqs.~\eqref{eq:average-close} and \eqref{eq:average-block} prove Eq.~\eqref{eq:robust-block}. For a pure state $\ket{\phi}\in T$,
\begin{eqnarray}
\abs{p_D(1|\hP_\phi)-\lambda_D}
=\abs{\bra{\phi}(\hM_D-\overline{M}_D)\ket{\phi}}
\le\norm{\hM_D-\overline{M}_D}_\infty\le\eta,
\label{eq:app-robust-offspan}
\end{eqnarray}
which proves Eq.~\eqref{eq:robust-offspan}.

\section{The stabilizer twirl and the block form}\label{app:twirl}

The stabilizer twirl used above is the conditional expectation onto the algebra of operators commuting with the unbroken group. For $T=S_D^\perp$, define
\begin{eqnarray}
\mathcal{T}_D(X) = \int_{U(T)}(\hPi_D \oplus \hW) X (\hPi_D \oplus \hW)^\dagger d\Haar(\hW).
\label{eq:twirl-map}
\end{eqnarray}
Writing $X$ in block form,
\begin{eqnarray}
X
=
\begin{pmatrix}
A & B \\ 
C & E 
\end{pmatrix},
\label{eq:app-block-X}
\end{eqnarray}
one obtains
\begin{eqnarray}
\mathcal{T}_D(X)
=
\begin{pmatrix}
A & 0 \\
0 & \frac{\Tr E}{\dim T} \Id_T
\end{pmatrix}.
\label{eq:explicit-twirl}
\end{eqnarray}
The off-diagonal blocks vanish because $\int_{U(T)}\hW d\Haar(\hW)=0$. The lower-right block becomes scalar because the Haar average is invariant under conjugation by every unitary on $T$. Eq.~\eqref{eq:explicit-twirl} is the explicit algebraic form behind the off-span blindness theorem.

This expression also explains the numerical figures. A referenceful effect may have a nontrivial lower-right block $E$, producing different predictions for different vectors in $T$. The reference-free twirl replaces $E$ by its normalized trace. Thus, every pure state in $T$ receives the same value, independent of the original eigenvectors of $E$.

\section{Randomized learners and finite-copy remarks}\label{app:randomized}

A learning algorithm may use internal randomness. Let $\omega$ denote its random seed and let $\hM_{D,\omega}$ be the effect output on that seed. The operational prediction is the averaged probability
\begin{eqnarray}
\bar{p}_D(1|\hrho)=\mathbb{E}_\omega\Tr(\hM_{D,\omega}\hrho)=\Tr(\bar M_D\hrho),
\quad
\bar M_D:=\mathbb{E}_\omega\hM_{D,\omega}.
\label{eq:randomized-average}
\end{eqnarray}
If the distribution over output effects is covariant, then the averaged effect $\bar{M}_D$ satisfies Eq.~\eqref{eq:covariance}, and all the theorems apply to the operational predictor. If covariance holds for every seed separately, then the theorems apply pathwise to each $\hM_{D,\omega}$.

The finite-copy setting introduces a different limitation. If the learner receives $N_i$ copies of each training state rather than exact projectors, then it must estimate the training rays and their span. This adds ordinary statistical error. It does not remove the stabilizer obstruction. At best, many copies allow the learner to determine $S_D$ more accurately; they do not create new independent directions outside the span of the training states. Therefore, the reference-rank barrier should be understood as an idealized lower bound that remains after ordinary estimation difficulty has been removed.

\section{Soft-loss version of the rank barrier}\label{app:soft-loss}

For completeness, we spell out Eqs.~\eqref{eq:absolute-loss-bound} and \eqref{eq:square-loss-bound}. Let $N=d-r$ be the number of unobserved basis states. Suppose that the labels on $I^c$ are balanced, so that $N_0=N_1=N/2$ when $N$ is even. Since Theorem~\ref{thm:offspan} forces the same probability $\lambda$ on every unobserved state, the absolute loss contributed by the unseen sector is
\begin{eqnarray}
\frac{1}{d}\left(N_0\abs{\lambda-0}+N_1\abs{\lambda-1}\right)
=\frac{1}{d}\frac{N}{2}\left(\lambda+1-\lambda\right)
=\frac{N}{2d}.
\label{eq:app-absolute}
\end{eqnarray}
For squared loss,
\begin{eqnarray}
\frac{1}{d}\left(N_0\lambda^2+N_1(1-\lambda)^2\right)
=\frac{N}{2d}\left[\lambda^2+(1-\lambda)^2\right]
\ge \frac{N}{4d},
\label{eq:app-square}
\end{eqnarray}
with equality at $\lambda=1/2$. If $N$ is odd, the same formulas hold up to the obvious $O(1/d)$ rounding correction. These soft-loss bounds express the same reference-rank obstruction without committing to a hard classification threshold.

\bibliographystyle{apsrev4-2}
\bibliography{ref_QML_no-go}

\end{document}